\documentclass[letterpaper, 10 pt, conference]{ieeeconf} 
\usepackage{amsmath}
\newcommand{\norm}[1]{\left\lVert#1\right\lVert}
\usepackage[export]{adjustbox}
\usepackage{verbatim}
\usepackage{comment}
\IEEEoverridecommandlockouts  

\usepackage{cite}
\usepackage{amsmath,amssymb,amsfonts}
\usepackage{algorithmic}
\usepackage{comment}
\usepackage{color}
\usepackage{graphicx}
\usepackage{epsfig} 
\graphicspath{{Figures/}}
\usepackage{textcomp}
\usepackage{xcolor}
\usepackage{flafter}
\usepackage{sidecap}
\usepackage{comment}
\usepackage{caption}
\usepackage{mathtools}
\usepackage{tikz}
\usepackage{psfrag}
\usepackage{float}
\usepackage{epstopdf}
\usepackage{subcaption}
\usepackage[ruled,vlined]{algorithm2e}
\usepackage{tabularx,booktabs}
\usepackage[ruled,vlined]{algorithm2e}
\usepackage{nomencl}
\usepackage{stackengine}
\usepackage{amssymb}
\usepackage{dsfont}
\usepackage{physics}
\newcommand{\Pc}{\mathcal{P}}
\makenomenclature
\newcommand{\triangeq}{\mathrel{\overset{\triangle}{=}}}

\usepackage{fancyhdr}
\usepackage{textcomp}
\usepackage{xcolor}
\usepackage{flafter}
\usepackage{sidecap}
\usepackage{comment}
\usepackage{caption}
\usepackage{mathtools}
\usepackage{tikz}
\usepackage{psfrag}
\usepackage{float}
\usepackage{epstopdf}
\usepackage{subcaption}
\usepackage[ruled,vlined]{algorithm2e}
\usepackage{tabularx,booktabs}
\usepackage{multirow}
\usepackage{nomencl}
\usepackage{stackengine}
\usepackage{amssymb}
\usepackage{dsfont}
\usepackage[export]{adjustbox}
\usepackage{mathtools}
\usepackage{stfloats}

\newtheorem{theorem}{\textbf{Theorem}}

\newtheorem{remark}{Remark}
\newtheorem{assumption}{\textbf{Assumption}}
\title{An Actor-Critic-Identifier Control Design for Increasing Energy Efficiency of Automated Electric Vehicles}

\author{ Hamed Faghihian$^{1}$, Arman Sargolzaei$^{1}$
\thanks{$^{1}$ Hamed Faghihian and Arman Sargolzaei are with the Department of Mechanical Engineering, University of South Florida, Tampa, FL 33620, USA. Emails: hfaghihian@usf.edu, a.sargolzaei@gmail.com}%
}

\begin{document}
\maketitle
\thispagestyle{empty}
\pagestyle{empty}

\begin{abstract}
Electric vehicles (EVs) are increasingly deployed, yet range limitations remain a key barrier. Improving energy efficiency via advanced control is therefore essential, and emerging vehicle automation offers a promising avenue. However, many existing strategies rely on indirect surrogates because linking power consumption to control inputs is difficult. We propose a neural-network (NN) identifier that learns this mapping online and couples it with an actor–critic reinforcement learning (RL) framework to generate optimal control commands. The resulting actor–critic–identifier architecture removes dependence on explicit models relating total power, recovered energy, and inputs, while maintaining accurate speed tracking and maximizing efficiency. Update laws are derived using Lyapunov stability analysis, and performance is validated in simulation. Compared to a traditional controller, the method increases total energy recovery by 12.84\%, indicating strong potential for improving EV energy efficiency.
\end{abstract}

\providecommand{\keywords}[1]
{
  \small	
  \textbf{\textit{Energy efficiency}} #1
}



\section{Introduction} \label{ssec:intro}

Despite the fact that electric vehicles (EVs) are bringing promising solutions for many of transportation problems, their market penetration remains constrained by range limitations \cite{faghihian2023energy}. However a primitive solution could be extending the battery capacity but this solution increases energy use and raises environmental, cost, and emissions concerns \cite{hung2021regionalized}. Therefore, the complementary solution is to improve efficiency through emerging capabilities in vehicles \cite{faghihian2024introduction}.

Automated vehicles (AVs) offer multiple potentials to increase energy efficiency through perception, signal-phase awareness, intersection coordination, or their combination \cite{al2018minimizing, chen2024achieving}. However, these advances are less developed for electric AVs (EAVs), where regenerative braking and different energy efficiency patterns necessitate EV-specific energy-efficient controllers development. 

Recent studies in EAV included energy optimal deceleration and braking strategies with a real time extension \cite{kim2021parameterized}, adaptive energy recovery from multi-source data \cite{shang2022regenerative}, and automated braking to extend range \cite{FAGHIHIAN20241}. However, most of these works only considered regenerative braking system, and they are not considering traffic efficiency and smoothness. Moreover, none demonstrates a controller that maximizes EAV energy efficiency within a designed driving scenario.

Energy efficiency in driving scenarios is commonly presented in the form of eco-driving which optimizes speed along a route \cite{mahmoud2021autonomous}. EV energy efficiency is also affected by battery and power electronics behavior and by mode switching, therefore consumption also depends on state dependent power flows \cite{yao2021adaptive}. As a result, fully coupled models that integrate the energy management system (EMS), vehicle dynamics, and powertrain are challenging to construct and remain underexplored.

Classical optimal control approaches including dynamic programming (DP), and model predictive control (MPC) propose powertrain models within EMS and vehicle dynamics to optimize speed regulation and energy use \cite{naeem2024energy}. These controllers performance depends highly on accurate mappings from control to power consumption or regeneration, which are frequently uncertain and sometimes inaccessible in practice \cite{jeong2024adaptive}; even when these maps are available, controller effectiveness is sensitive to model fidelity and computational power of the controller in realtime. To solve these issues, machine learning (ML) based method proposed which mitigate model-dependence. In particular, reinforcement learning (RL) has been used to learn energy-aware driving policies and enable adaptive control \cite{isele2018navigating, chen2020deep}, with applications to energy management in hybrid electric vehicles and fuel cell electric vehicles \cite{lee2021energy}. However, these studies do not target EVs and typically separates identification from control, lacking a controller that simultaneously learns the dynamics and computes an optimal policy.

Another learning based approach used for energy management in literature is Q-learning \cite{xiong2018battery, nyong2020reinforcement}. Q-learning suffers from dependency to tabular schemes which bring discretizations and weak convergence guarantees \cite{hu2019reinforcement}, Deep Q-function approximators improve scalability \cite{du2021heuristic, he2021improved}. However, they still inherit discretizations problem. Another approach to avoid discretizations introduced in literature is continuous actor–critic method  \cite{haarnoja2018soft}, this method is supervisory and can not handle unknown EV dynamics or enforce drive-cycle tracking while minimizing energy. We therefore utilize a NN-based, online approach that couples policy learning with system identification.

This paper proposes an actor–critic–identifier (ACI) controller that learns system behavior during acceleration and deceleration while tracking both system state and desired behavior allowing for generating an energy efficient control signal. Lyapunov-based analysis guides the update laws for actor, critic, and identifier. Our contributions are: (i) an ACI RL controller that improves EV energy efficiency with accurate drive-cycle tracking; (ii) elimination of explicit power-consumption modeling for adaptability across EVs and conditions; (iii)  designing actor, critic, and identifier update laws based on Lyapunov stability analysis.

The rest of this paper is organized as follows. In Section~\ref{aci.sec.2}, we present the EV model, cost, and optimal control formulation. In Section~\ref{sec:aci_method}, we present challenges, the ACI controller architecture, learning laws and update policies for each of NNs. In Section~\ref{aci.sec.5}, we establish stability via Lyapunov stability analysis. In Section~\ref{aci.sec.6}, we report simulations and compare against a tuned PID baseline. In Section~\ref{aci.sec.7}, we conclude and outline future work.

\section{Problem Formulation and Objectives} \label{aci.sec.2}
The system dynamics which serve as the foundation for constructing the optimization problem is discussed in this section.  
\subsection{Dynamic model}
The EV longitudinal unknown model can be represented as
\begin{equation} \label{aci.eq.1}
    \dot{x}(t)=F(x(t),u(t)) = g(x(t))+h(x(t))u(t),
\end{equation}
 where \( u(t) \in  U \subseteq \mathbb{R} \) is the control input, which is considered here as the torque of the EV traction motor. \( x(t) \in X \subseteq \mathbb{R}^2 \) denotes the state vector of the system with $x_1(t)\triangeq v_v(t)-v_d(t)$, where $v_v(t),~v_d(t) \in \mathbb{R}$ are the vehicle longitudinal speed and the vehicle desired speed respectively, and $x_2(t)\triangeq P_\text{acc}(t)-P_{\text{rb}}(t)$, where \( P_{\text{acc}(t)} \in \mathbb{R}\) is the power needed for acceleration, and \( P_{\text{rb}(t)} \in \mathbb{R}\) is power recovered during regenerative braking. The functions $g(x(t)),~ h(x(t)) \in \mathbb{R}^2$ in \eqref{aci.eq.1} are considered to be second-order differentiable functions.
\assumption \label{aci.ass0}
      The function \( h(x) \) considered to be known and bounded such that $\norm{h(x)} \leq \Bar{h}$, where $\Bar{h} \in \mathbb{R}^+$ is a constant value\footnote{We can consider \( h(x) \) as a bounded function since the physical model of vehicle is bonded.
}.
 
However we are assuming that $h(x)$ is known, but $g(x)$ is unknown. Therefore, the second state variable \( x_2(t) \) does not show a direct relationship to the vehicle's acceleration and the relationship is unknown\footnote{Considering the input–power mapping as unknown is physically justified in EVs as it depends on SOC, temperature, speed/torque, efficiency maps, etc., all are time-varying, explicit models are often unavailable.}, which is our main challenge addressing in this paper. For simplicity of notations, the time variable \( t \) is omitted in the remainder of the paper. 

\subsection{Problem formulation} \label{aci.subsec.PF}
Given the dynamic model \eqref{aci.eq.1}, and knowing that $g(x)$ is unknown, the problem statement is to design a control signal $u$, to minimize total energy consumption, while it keep the desired velocity. To achieve that, let's define the optimal value function ,\( J^*(x) \in \mathbb{R}\), and the subsequent control law \( u^*(x) \in \mathbb{R} \) for the system we introduced in \eqref{aci.eq.1} would be written as
\begin{equation} \label{aci.eq.2}
\begin{cases}
    J^*(x) = \min_{\substack{u(\tau) \in \Pi(X) \\ t \leq \tau < \infty}} \int_t^{\infty} \ell(x(s), u(x(s))) \, ds,~J^*(0) = 0
\\
    u^*(x) =-\tfrac{1}{2}\beta^{-1}h^T(x)(\pdv{J^*}{x}),
    \end{cases}
\end{equation}
where \( \Pi(X) \) is admissible control policies, \( \beta \in \mathbb{R^+}\) is a constant.

 To ensure speed tracking, energy efficiency, with minimum control, we designed \( \ell(x, u): X \times U \to \mathbb{R} \) as
\begin{equation} \label{aci.eq.3}
    \ell(x, u) = C(x) + u^T \beta u,
\end{equation}
where the function \( C(x) \) designed as 
\begin{equation} \label{aci.eq.3_9}
\begin{aligned}
    C(x)=&~q_1\big(v_v(t)-v_d(t)\big)^2
    +q_2 P^2_{accel}(t),
    \end{aligned}
\end{equation}
where \( q_1, q_2 \in \mathbb{R}^+ \) are weighting coefficients to penalizes deviation from the desired speed, and power drawn in acceleration phases. The acceleration terms, \( P_{accel}(t) \) defined based on $x_2(t)$ as
$P_{accel}(t)\triangeq \max(x_2(t), 0).$
This formulation allows energy consumption to be integrated into the optimization via a power parameter that can be measured experimentally or through whole-vehicle modeling.

The controller is designed by minimizing the Hamilton–Jacobi–Bellman (HJB) residual \cite{kamalapurkar2018reinforcement}. Therefor, the system Hamiltonian would be defined as
\begin{equation}\label{aci.eq.6}
\mathcal{H}(x,u,J_x) \triangleq J_x F(x,u) + \ell(x,u),
\end{equation}
where $J(x) \triangleq \int_0^\infty \ell(x(t),u(t))\,dt$, and $J_x \triangleq \tfrac{\partial J}{\partial x} \in \mathbb{R}^{1\times 2}$. At optimality we have
\begin{equation}\label{aci.eq.7}
\mathcal{H}(x,u^*,J_x^*) = J_x^* F(x,u^*) + \ell(x,u^*) = 0,
\end{equation}
which shows no further cost reduction under $u^*(x)$ (as in \eqref{aci.eq.2}). Since $u^*, J^*, F$ are unknown, we use approximations $\hat{u}, \hat{J}, \hat{F}$ respectively, leads to defining the HJB residual as
\begin{equation}\label{aci.eq.12}
\delta_{\mathrm{HJB}} \triangleq \hat{\mathcal{H}}(x,\hat{u},\hat{J}_x) - \mathcal{H}(x,u^*,J_x^*)
= \hat{J}_x \hat{F}(x,\hat{u}) + \ell(x,\hat{u}),
\end{equation}
where $\hat{J}_x \triangleq \tfrac{\partial \hat{J}}{\partial x}$. The goal is to design a continuous policy $\hat{u}$ that minimizes $\delta_{\mathrm{HJB}}$ so that the realized policy tracks the optimal one in \eqref{aci.eq.2}. 

\section{Method: Actor-Critic-Identifier (ACI)}
\label{sec:aci_method}

\subsection{Challenges}
Designing an optimal controller for the nonlinear EV system presents three major challenges.  
First, the optimal value function $J^*(x)$ and control policy $u^*(x)$ exist but are unknown and must be approximated.  
Second, the system dynamics contain an unknown drift term $g(x)$, requiring a real-time identifier to learn and compensate for these dynamics, which is the main challenge.  Third, the optimal controller must simultaneously minimize the HJB residual $\delta_{\mathrm{HJB}}$ while ensuring stability and boundedness of all estimates under time-varying driving conditions.  

\subsection{Solution Architecture}
Since we need to approximate the unknown optimal value function, policy, and dynamics, we utilize three coordinated neural networks: a \emph{critic} for approximation of $J^*(x)$, an \emph{actor} to approximate the control policy $u^*(x)$, and an \emph{identifier} to estimate the unknown part of system dynamics, $g(x)$. The three modules are coupled with the the \emph{HJB residual}, $\delta_{\mathrm{HJB}}$, therefore, in each update the residual reducing and shift the realized policy toward the optimal one. This architecture described in Figure \ref{aci.controller_scheme}.

\begin{figure}
    \centering
    \includegraphics[width=0.9\linewidth]{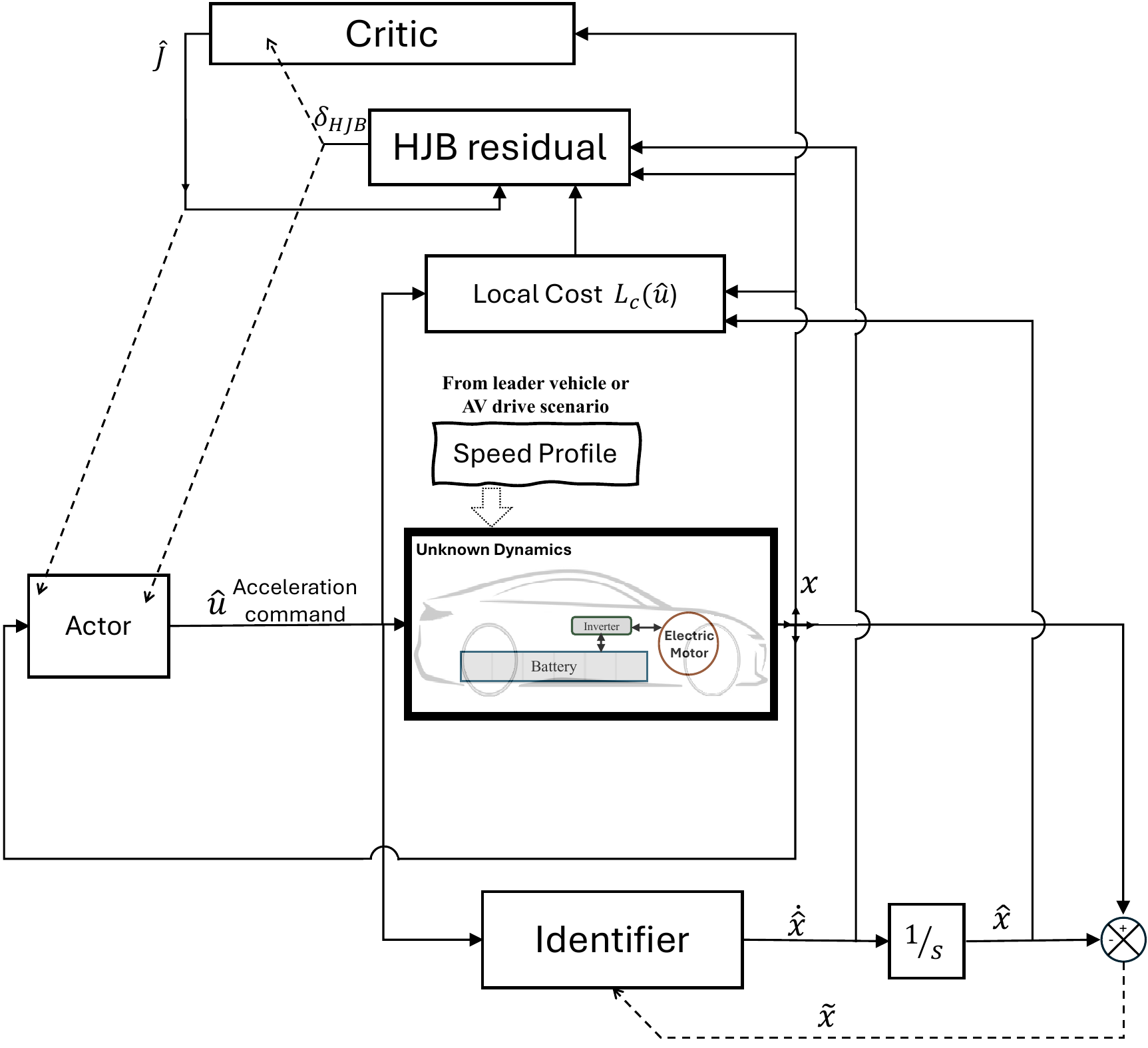}
    \caption{The controller architecture}
    \label{aci.controller_scheme}
\end{figure}

\subsection{Critic Network}
We approximate the value function as
\begin{equation} \label{aci.eq.8}
\hat J(x) = \hat w_c^{T}\,\phi(x), \qquad \phi(x) \in \mathbb{R}^N,
\end{equation}
with basis vector $\phi(\cdot)$ and critic weights $\hat w_c \in \mathbb{R}^N$. Let the regressor be defined as
\begin{equation} \label{critic_regressor}
\varphi(x, \hat{x}, \hat{u}) = \phi'(x) \hat{F}(x, \hat{u}).
\end{equation}
Based on the Lyapunov stability analysis described in Section \ref{aci.sec.5}, We design the critic update law with adaptation gain matrix, $\Pc \in \mathbb{R}^{N\cross N}$, as
\begin{equation} \label{critic_update}
\dot{\hat w}_c = -\,k_{c1}\,\frac{\Pc\,\varphi}{1+\kappa\,\varphi^{T}\Pc\varphi}\,\delta_{\mathrm{HJB}}
\;-\;k_{c2}\,(\hat w_a-\hat w_c),
\end{equation}
\begin{equation} \label{gamma_update}
    \dot{\Pc} = -\,k_{c1}\,\Pc\,\frac{\varphi\varphi^{T}}{1+\kappa\,\varphi^{T}\Pc\varphi}\,\Pc.
\end{equation}
 $\Pc $ is initializing as \( \Pc(0) = \Pc_1 I \), where \( I \in \mathbb{R}^{N \times N} \) is the identity matrix and \( \Pc_1 \in \mathbb{R}^+ \) is a positive constant with covariance resetting to keep $0\le \Pc_0 I \le \Pc(t)\le \Pc_1 I$. Therefore, based on \eqref{gamma_update}, \( \dot{\Pc}(t) \leq 0 \), showing \( \Pc(t) \) is non-increasing over time. The covariance resetting technique employed To prevent \( \Pc(t) \) to become arbitrarily small, to prevent excessively slow updates in the critic network. This would ensure stable and sufficiently responsive learning. 

\subsection{Actor Network}

The actor designed to generates the control policy \( \hat{u}(x) \) by approximating the optimal control \( u^*(x) \). Therefore, by using the approximation method of \( \hat{J}(x) \) similar to \eqref{aci.eq.8}, we can write
\begin{equation} \label{aci.eq.23.u_hat}
\hat u(x) \;=\; -\tfrac{1}{2}\,\beta^{-1}h^{T}(x)\,\phi'(x)^{T}\hat w_a,
\end{equation}
where $\hat w_a\in\mathbb{R}^N$ are actor weights. By Defining $H(x)\triangleq h(x)\beta^{-1}h^{T}(x)$. The actor update rule based on the Lyapunov stability analysis (Section \ref{aci.sec.5}) designed as
\begin{equation} \label{actor_update}
\begin{aligned}
    \dot{\hat{w}}_a = \text{proj} \Bigg\{ 
    - \frac{2 k_{a1}}{\sqrt{1 + \varphi^T \varphi}} \big[
    \left( \hat{w}_c^T \phi' \frac{\partial \hat{F}(x,\hat{u})}{\partial \hat{u}} \frac{\partial \hat{u}}{\partial \hat{w}_a} \right)^T \\~~~~~~~~~~~~~~~~~+ \frac{\partial \hat{u}}{\partial \hat{w}_a}^T \beta~\hat{u}\big] \delta_{\text{HJB}}-k_{a2}(\hat{w}_a-\hat{w}_c)
    \Bigg\},
\end{aligned}
\end{equation}
where $\mathrm{proj}\{\cdot\}$ enforcing boundedness.

\subsection{Identifier Network Design}
The identifier network provides approximations of system dynamic model. Based on the Lyapunov stability analysis in Section \ref{aci.sec.5} we designed an identifier by an NN to estimate the unknown part of the system dynamic, $g(x)$ as
\begin{equation} \label{aci.identifier2.1}
g(x) \;=\; w_g^{T}\sigma(v_g^{T}x) + \xi_g(x),
\end{equation}
where $w_g, v_g$ are ideal weights and $\xi_g$ is reconstruction error. Therefore the identifier estimates the system dynamics as
\begin{equation} \label{identifier3}
    \dot{\hat{x}}=\hat{F}(x) = \hat{w}_g^T \hat{\sigma}_g+ h(x) \hat{u}+r_t,
\end{equation}
where $\hat\sigma_g\triangleq \sigma(\hat v_g^{T}\hat x)$ and $r_t$ is a RISE feedback term defined as
\begin{equation} \label{identifier4}
\begin{aligned}
r_t \triangleq p_1\big(\tilde x(t)-\tilde x(0)\big) + \nu(t), \quad \\
\dot{\nu} = (p_1\alpha+\gamma)\tilde x + p_2\,\mathrm{sgn}(\tilde x),\ \ \nu(0)=0,
\end{aligned}
\end{equation}
where $\tilde x \triangleq x-\hat x$ and $p_1,\alpha,\gamma,p_2$ are positive gains. The identifier weight updates designed using Lyapunov stability analysis provided in Section \ref{aci.sec.5} (with constant positive $\Upsilon_w \in\mathbb{R}^{(L_g+1)\cross (L_g+1)}$ and $\Upsilon_v \in \mathbb{R}^{2 \cross 2}$) are 
\begin{equation} \label{identifier5}
\begin{aligned}
   & \dot{\hat w}_g = \mathrm{proj}\!\left\{\Upsilon_w\hat\sigma_g' \hat v_g^{T}\dot{\hat x}\tilde x^{T}\right\},
\dot{\hat v}_g = \mathrm{proj}\!\left\{\Upsilon_v\dot{\hat x}\tilde x^{T}\hat w_g^{T}\hat\sigma_g'\right\}.
\end{aligned}
\end{equation}
This ACI loop yields a continuous controller $\hat u$ that tracks the optimal policy without requiring the explicit power-consumption model of the vehicle.

\subsection{HJB Residual based on NNs}
The HJB residual with respect to \eqref{aci.eq.3}, and \eqref{aci.eq.7}-\eqref{critic_regressor} can be write as
\begin{equation} \label{aci.eq.25}
    \delta_{\text{HJB}}=\hat{w}_c^T \varphi +\hat{u}^{T} \beta \hat{u} - J^*_x F(x,u^*) - u^{*T} \beta u^*.
    \end{equation}  

Substituting \eqref{aci.eq.23.u_hat} into \eqref{aci.eq.25} gives
\begin{equation} \label{aci.eq.26}
\begin{aligned}
        \delta_{\text{HJB}}=&\hat{w}_c^T \varphi +[-\frac{1}{2}\hat{w}^T_a \phi^{\prime} h \beta^{-1}] \beta [-\frac{1}{2} \beta^{-1} h^T \phi^{\prime T} \hat{w}_a]\\& - [-\frac{1}{2}  (\xi_v^{\prime}+w^T \phi^{\prime})\beta^{-1} h] \beta [-\frac{1}{2} \beta^{-1} h^T (\phi^{\prime T}w+\xi_v^{\prime T})]\\&- [w^T \phi^\prime+\xi^\prime_v] F(x,u^*).         \end{aligned}
    \end{equation}
With $\tilde{w}_a \!\triangleq\! w-\hat{w}_a$, $\tilde{w}_c \!\triangleq\! w-\hat{w}_c$, $\tilde{u}\!\triangleq\!u^*-\hat{u}$, and $\tilde{F}(x,\hat{u})\!\triangleq\!F(x,\hat{u})-\hat{F}(x,\hat{u})$, \eqref{aci.eq.26} simplifies to
\begin{equation} \label{aci.eq.27}
\begin{aligned}
        \delta_{\text{HJB}}=&\hat{w}_c^T \varphi -w^T \phi ^\prime \Tilde{F}(x,\hat{u}) -\xi ^\prime _v F(x,u^*) \\&+\frac{1}{4}\Tilde{w}^T_a \phi^{\prime} H \phi^{\prime T} \Tilde{w}_a -\frac{1}{4}\xi_v^{\prime} H\xi_v^{\prime T}.
        \end{aligned}
    \end{equation}

\section{Stability Analysis} \label{aci.sec.5}
The actor, critic, and identifier updates are derived through Lyapunov stability analysis. To provide stability analysis we need preliminary calculations. \subsection{Necessary Assumptions and Calculations}
\assumption \label{aci.ass1}
The ACI's NN weights, $w$ and its approximations $w_a,~w_c$ are bounded such that
$w \leq \Bar{w}$, and $w_a \leq \Bar{w}_a$, $w_c \leq \Bar{w}_c$, $v_g \leq \Bar{v}_g$, and $w_g \leq \Bar{w}_g$, where $\Bar{w},~\Bar{w}_a,\text{and}~\Bar{w}_c \in \mathbb{R}^+$, $\Bar{v}_g, \Bar{w}_g \in \mathbb{R}^+$ are known constants.

\assumption \label{aci.ass2_ac}
The functions \( \phi(x), \xi_v(x) \) and their derivatives \( \phi'(x), \xi'_v(x) \) are bounded such that
\begin{equation*} \label{activation_bounds}
\begin{cases}
\|\phi(x)\| \leq M_\phi \quad \text{and} \quad \|\phi'(v_g^T x)\| \leq M_{\phi'}
\\
\|\xi_v(x)\| \leq M_{\xi_v} \quad \text{and} \quad \|\xi'_v(x)\| \leq M_{\xi'_v},
\end{cases}
\end{equation*}
where  \( M_\phi,~M_{\phi'},~M_{\xi_v},~\text{and}~ M_{\xi'_v} \) are positive definite constants
\assumption \label{aci.ass3_ac}
As $N \to \infty$ the reconstruction error and its derivatives, $\xi_v(x), \xi^\prime_v(x) \to 0$.

\assumption \label{aci.ass2}
The identifier activation $\sigma(v_g^\top x)$, its derivative, and the reconstruction error $\xi_g(x)$ with its derivative are bounded; i.e., there exist $M_\sigma, M_{\sigma'}, M_\xi, M_{\xi'}>0$ such that 
\begin{equation*} \label{activation_bounds}
\begin{cases}
\|\sigma(v_g^T x)\| \leq M_\sigma \quad \text{and} \quad \|\sigma'(v_g^T x)\| \leq M_{\sigma'}
\\
\|\xi_g(x)\| \leq M_\xi \quad \text{and} \quad \|\xi'_g(x)\| \leq M_{\xi'}.
\end{cases}
\end{equation*}

By defining \( e_g \triangleq \dot{\Tilde{x}} + \alpha \Tilde{x} \) and Using \eqref{identifier3}, \eqref{aci.identifier2.1}, and \eqref{identifier5} we can write
 \begin{equation} \label{aci.eq.proof01}
\begin{aligned}
        &\dot{e}_g=\Ddot{\Tilde{x}}+\alpha \dot{\Tilde{x}}=w_g^T\sigma_g^\prime v_g ^ T \dot{x} \\& -\big(\dot{\hat{w}}^T_g \hat{\sigma}_g + \hat{w}^T_g \hat{\sigma}_g ^\prime \dot{\hat{v}}^T_g \hat{x}+\hat{w}_g^T \hat{\sigma}^\prime_g \hat{v}^T_g\dot{\hat{x}}\big)+\dot{\xi}_g-\dot{r}_t+\alpha\dot{\Tilde{x}},
        \end{aligned}
    \end{equation}
which can be written as
\begin{equation} \label{aci.eq.proof02}
\begin{aligned}
        \dot{e}_g=T_1+T_2+T_3-p_1e_g-\gamma \tilde{x}-p_2 \text{sgn} (\tilde{x}).
     \end{aligned}
    \end{equation}    
  where  
\begin{equation} \label{aci.eq.proof03}
\begin{aligned}
& T_1 = \alpha\dot{\Tilde{x}}-\dot{\hat{w}}^T_g \hat{\sigma}_g-\hat{w}^T_g \hat{\sigma}_g^{\prime} \dot{\hat{v}}^T_g \hat{x}
+\tfrac{1}{2}w^T_g \hat{\sigma}_g^{\prime} \hat{v}^T_g \dot{\tilde{x}}
+\tfrac{1}{2}\hat{w}^T_g \hat{\sigma}_g^{\prime} v^T_g \dot{\tilde{x}},\\[2pt]
& T_2 = w^T_g \sigma_g^{\prime} v_g^T \dot{x}
-\tfrac{1}{2}w^T_g \hat{\sigma}_g^{\prime} \hat{v}^T_g \dot{x}
-\tfrac{1}{2}\hat{w}^T_g \sigma_g^{\prime} v^T_g \dot{x}
+\dot{\xi}_g,\\[2pt]
& T_3 = \tfrac{1}{2}\tilde{w}^T_g \hat{\sigma}_g^{\prime} \hat{v}^T_g \dot{\hat{x}}
+\tfrac{1}{2}\hat{w}^T_g \hat{\sigma}_g^{\prime} \tilde{v}^T_g \dot{\hat{x}}.
\end{aligned}
\end{equation}
\remark \label{aci.remark01} \label{aci.remark02} \label{aci.remark03} \label{aci.remark04} \label{aci.remark05}
Under Assumptions~\ref{aci.ass0}, \ref{aci.ass1}, \ref{aci.ass2}, the identifier \eqref{identifier3}, and the projection bounds in \eqref{identifier5}, let
$\theta \triangleq [\,\tilde{x}^\top\; e_g^\top\,]^\top$ and define
\begin{equation*}
    \begin{aligned}
&T_4 \triangleq \tfrac{1}{2}\tilde{w}_g^{\top}\hat{\sigma}_g^{\prime} \hat{v}_g^{\top}\dot{\tilde{x}}
           + \tfrac{1}{2}\hat{w}_g^{\top}\hat{\sigma}_g^{\prime} \tilde{v}_g^{\top}\dot{\tilde{x}},
\\
&T_5 \triangleq T_3 + T_4
           = \tfrac{1}{2}\tilde{w}_g^{\top}\hat{\sigma}_g^{\prime} \hat{v}_g^{\top}\dot{x}
           + \tfrac{1}{2}\hat{w}_g^{\top}\hat{\sigma}_g^{\prime} \tilde{v}_g^{\top}\dot{x},\\
&T_6 \triangleq T_2 + T_5.
    \end{aligned}
\end{equation*}
Then there exist nondecreasing functions $\rho_1,\rho_2:\mathbb{R}^+\!\to\!\mathbb{R}^+$ and constants $c_1,c_2,c_3,c_4,c_5,c_6\in \mathbb{R}^+$ such that
\begin{equation*}
    \begin{aligned}
&\|T_1\| \le \rho_1(\|\theta\|)\,\|\theta\|,~~~~~~~~~~~~~~~~~~~~~~
\|T_2\| \le c_1,\\
&\|\dot{\tilde{x}}^{\top}T_4\| \le c_2\|\dot{\tilde{x}}\|^2 + c_3\|e_g\|^2,~~~~~~~~~~~~
\|T_5\| \le c_4,\\ &
\|T_6\| \le c_1 + c_4,~~~~~~~~~~~~~~
\|\dot{T}_6\| \le c_5 + c_6\,\rho_2(\|\theta\|)\,\|\theta\|.
    \end{aligned}
\end{equation*}

\subsection{Proof of Stability}
\begin{theorem} \label{aci.th.1}

With the update laws \eqref{identifier5}, the identifier \eqref{identifier3} guarantees asymptotic tracking $\|\tilde{x}(t)\|,\,\|\dot{\tilde{x}}(t)\|\!\to\!0$ as $t\!\to\!\infty$ whenever
\begin{equation}
\begin{aligned}
&p_1>2c_3 ~~~~~~ p_2 > \text{max} \{c_1+c_4,+c_1+\tfrac{c_5}{\alpha}\},\\& p_3>c_6,~~~~~~ \gamma> \tfrac{2c_2}{\alpha}.
\end{aligned}
\end{equation}

    \end{theorem}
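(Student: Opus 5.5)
This is the standard RISE-identifier argument (Bhasin/Kamalapurkar style). Augment the error state with a filtered error $r \triangleq \dot{e}_g + \alpha e_g$, or equivalently use $\theta=[\tilde x^\top\ e_g^\top]^\top$ together with an auxiliary function $Q(t)$ that absorbs the discontinuous $p_2\,\mathrm{sgn}(\tilde x)$ term. The gain $p_3$ in the statement does not appear in the identifier \eqref{identifier4}. I will read it as the gain multiplying $\rho_2$-type terms in a Lyapunov-like certificate of the form $p_3\|\tilde x\|$, i.e. as part of the sign-term gain split $p_2$ versus $p_3$. The plan follows the usual Xian--Dawson--de Queiroz RISE proof, specialized to the first-order identifier error $e_g=\dot{\tilde x}+\alpha\tilde x$.

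\textbf{Step 1 (error system).} Regroup \eqref{aci.eq.proof02} as
\[
\dot e_g=T_1+T_6-T_4-p_1e_g-\gamma\tilde x-p_2\,\mathrm{sgn}(\tilde x),
\]
using $T_3=T_5-T_4$ and $T_6=T_2+T_5$. The terms are then sorted as follows. $T_1$ vanishes with $\theta$, by the Remark's bound $\rho_1(\|\theta\|)\|\theta\|$. $T_6$ is bounded with bounded derivative, and is to be dominated by the sign term. $T_4$ is cross-coupled through $\dot{\tilde x}$ and handled by the $c_2,c_3$ bound. The projection in \eqref{identifier5} keeps $\hat w_g,\hat v_g$ bounded, which is what makes the constants $c_i$ of the Remark valid globally in time.

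\textbf{Step 2 (auxiliary function).} Define
\[
L(t)\triangleq \dot{\tilde x}^\top\big(T_6-p_2\,\mathrm{sgn}(\tilde x)\big),\qquad
Q(t)\triangleq p_2\|\tilde x(0)\|_1-\tilde x(0)^\top T_6(0)-\int_0^t L(\tau)\,d\tau .
\]
Substitute $\dot{\tilde x}=e_g-\alpha\tilde x$ and integrate the $e_g$-part by parts, which produces $\tilde x^\top T_6\big|_0^t-\int\tilde x^\top\dot T_6$. Then apply $\|T_6\|\le c_1+c_4$ and $\|\dot T_6\|\le c_5+c_6\rho_2\|\theta\|$. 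With $p_2>\max\{c_1+c_4,\,c_1+c_5/\alpha\}$ this gives $Q\ge0$, except for a residual $c_6\rho_2(\|\theta\|)\|\theta\|\|\tilde x\|$ term. That residual is precisely what $p_3>c_6$ is meant to absorb, and I would fold it into the Lyapunov derivative rather than into $Q$.

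\textbf{Step 3 (Lyapunov function).} Take $V=\tfrac{\gamma}{2}\tilde x^\top\tilde x+\tfrac12 e_g^\top e_g+Q$ and differentiate along Filippov solutions. Because $\dot Q=-L$, the $T_6$ and sign terms cancel. Young's inequality on $e_g^\top T_1$, the bound on $\dot{\tilde x}^\top T_4$ (with $\|\dot{\tilde x}\|^2\le 2\|e_g\|^2+2\alpha^2\|\tilde x\|^2$), and the gain conditions $p_1>2c_3$, $\gamma>2c_2/\alpha$ then yield
\[
\dot V\le -\lambda\|\theta\|^2+\rho(\|\theta\|)\|\theta\|^2 .
\]
This is semi-global: on the set $\rho(\|\theta\|)<\lambda$, $V$ is bounded, so $\theta,\dot\theta\in\mathcal L_\infty$, and $\int\|\theta\|^2<\infty$. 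Barbalat's lemma (LaSalle--Yoshizawa for nonsmooth systems) gives $\theta\to0$, hence $\tilde x\to0$ and $e_g\to0$, and therefore $\dot{\tilde x}=e_g-\alpha\tilde x\to0$.

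\textbf{Main obstacle.} The main obstacle is Step 2, proving $Q\ge0$: handling the integration by parts with the $\dot T_6$ bound that grows with $\|\theta\|$, and making the stated gain conditions (including the unexplained $p_3$) match. If the state-dependent part cannot be closed inside $Q$, I would move $c_6\rho_2\|\theta\|\|\tilde x\|$ into $\dot V$ and enlarge $\rho$. The price is only a smaller region of attraction, which keeps the result semi-global.
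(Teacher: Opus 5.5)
Your Steps 2 and 3 do not fit together, and the problem is larger than the $c_6$ residual you flag. With $V=\tfrac{\gamma}{2}\tilde x^\top\tilde x+\tfrac12 e_g^\top e_g+Q$, the term $e_g^\top\dot e_g$ contributes $e_g^\top\big(T_6-p_2\,\mathrm{sgn}(\tilde x)\big)-e_g^\top T_4$. But $\dot Q=-L$ removes only $\dot{\tilde x}^\top\big(T_6-p_2\,\mathrm{sgn}(\tilde x)\big)$. Since $e_g=\dot{\tilde x}+\alpha\tilde x$, the claimed cancellation fails: a residual $\alpha\tilde x^\top\big(T_6-p_2\,\mathrm{sgn}(\tilde x)\big)$ survives. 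Also, the cross term is $e_g^\top T_4$, and the Remark's bound is on $\dot{\tilde x}^\top T_4$, so it does not apply. In the other direction, for your $L$ the integration by parts is exact:
\[
Q(t)=p_2\|\tilde x(t)\|_1-\tilde x(t)^\top T_6(t)+\int_0^t \tilde x^\top\dot T_6\,d\tau .
\]
So the hypothesis $p_2>c_1+c_5/\alpha$ never enters, and the $c_5\|\tilde x\|$ part of the integral is just as uncontrolled as the $c_6$ part. The natural consistent repairs are the textbook RISE choice $L=e_g^\top(T_6-p_2\,\mathrm{sgn}(\tilde x))$, or equivalently folding your residuals into $\dot V$. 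Both pair all of $T_6$ with $e_g$, and both need $p_2>c_1+c_4+c_5/\alpha$. That is strictly stronger than the theorem's $p_2>\max\{c_1+c_4,\,c_1+c_5/\alpha\}$. Note also that you use \eqref{identifier5} only through projection, although the theorem is stated for these particular laws.

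The missing idea is to put the weight-error term $\Omega=\tfrac{\alpha}{4}\big(\operatorname{tr}(\tilde w_g^\top\Upsilon_w^{-1}\tilde w_g)+\operatorname{tr}(\tilde v_g^\top\Upsilon_v^{-1}\tilde v_g)\big)$ in the Lyapunov function. Along \eqref{identifier5} (off the projection boundary), $\dot\Omega=-\alpha\tilde x^\top T_3$. Hence $e_g^\top T_3+\dot\Omega=\dot{\tilde x}^\top T_3=\dot{\tilde x}^\top T_5-\dot{\tilde x}^\top T_4$. This is the only mechanism that turns an $e_g$-pairing into a $\dot{\tilde x}$-pairing, and it acts on $T_3$ alone. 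That is why the paper keeps $T_2$ and the sign term paired with $e_g$ and pairs only $T_5$ with $\dot{\tilde x}$: it sets $\dot\Theta=-e_g^\top\big(T_2-p_2\,\mathrm{sgn}(\tilde x)\big)-\dot{\tilde x}^\top T_5+p_3\rho_2(\|\theta\|)\|\theta\|\|\tilde x\|$ with $\Theta(0)=p_2\|\tilde x(0)\|_1-\tilde x(0)^\top T_6(0)$. Integrating by parts, the boundary term needs $p_2>c_1+c_4$. Once $p_2>c_1+c_5/\alpha$, the integrand $\alpha\tilde x^\top\big(T_2-p_2\,\mathrm{sgn}(\tilde x)\big)-\tilde x^\top\dot T_6$ is nonpositive up to $c_6\rho_2(\|\theta\|)\|\theta\|\|\tilde x\|$, and $p_3>c_6$ absorbs that remainder; your reading of $p_3$ is right. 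With $V_{L_1}=\tfrac12 e_g^\top e_g+\tfrac12\gamma\tilde x^\top\tilde x+\Theta+\Omega$, everything cancels except $e_g^\top T_1-p_1\|e_g\|^2-\alpha\gamma\|\tilde x\|^2-\dot{\tilde x}^\top T_4+p_3\rho_2(\|\theta\|)\|\theta\|\|\tilde x\|$. Here the Remark's bounds do apply, and your semi-global Barbalat conclusion then goes through.
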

    \begin{proof}Define \(y_1 \triangleq [\,\tilde{x}^\top\; e_g^\top\; \sqrt{\Theta}\; \sqrt{\Omega}\,]^\top \in \mathbb{D}\subset\mathbb{R}^6\) and the Lyapunov candidate function as
\begin{equation}\label{aci.eq.proof1.1}
V_{L_1}=\tfrac12 e_g^\top e_g+\tfrac12\gamma\,\tilde{x}^\top\tilde{x}+\Theta+\Omega, ~~~~~~V_{L_1} : \mathbb{D} \to \mathbb{R}
\end{equation}
where 

\[\Omega \triangleq \tfrac{\alpha}{4}\!\left(\operatorname{tr}\!\left(\tilde{w}_g^\top\Upsilon_w^{-1}\tilde{w}_g\right)+\operatorname{tr}\!\left(\tilde{v}_g^\top\Upsilon_v^{-1}\tilde{v}_g\right)\right).\]
   
\remark \label{aci.remark07}
Let $\Theta(\theta,t)\in\mathbb{R}$ be the Filippov solution of \eqref{aci.eq.proof1.2} \cite{filippov2013differential}:
\begin{equation}\label{aci.eq.proof1.2}
\begin{aligned}
\dot{\Theta}&=- e_g^T \left( T_2 - p_2\, \text{sgn}(\tilde{x}) \right) 
    - \dot{\tilde{x}}^T T_5 
    + p_3 \rho_2(\|\theta\|) \|\theta\| \|\tilde{x}\|,
    \\&\Theta(0) = p_2 \sum_{i=1}^n \left| \tilde{x}_i(0) \right| - \tilde{x}^T(0) T_6(0),
\end{aligned}
\end{equation}
Then $\Theta(t)\ge 0$ for all $t\ge 0$ if the gains satisfy
$p_2>\max{,c_1+c_4,; c_1+c_5/\alpha,}$ and $p_3>c_6$.   
    
    Applying conditions for Remark \ref{aci.remark07} with \eqref{aci.eq.proof1.1} we can find positive definite bonds, $b_{11}, b_{12}$ on $V_{L_1}$, such that
\begin{equation}\label{aci.eq.proof1.3}
    b_{11}(y_1) \leq V_{L_1}(y_1) \leq b_{12}(y_1),
\end{equation}
where \( b_{11}(y) \triangleq \frac{1}{2} \min(1, \gamma)\) and \( b_{12}(y) \triangleq \max(1, \gamma) \|y_1\|^2\).

Using Filippov’s framework, a generalized Lyapunov analysis utilized. Therefore, generalized time derivative of \eqref{aci.eq.proof1.1} for almost all \(t\in[t_0,t_f]\), can be written as
\begin{equation}\label{aci.eq.proof1.4}
\dot V_{L_1}(y_1)\ \in\ \bigcap_{\chi\in\partial V_{L_1}(y_1)} \chi^\top\,K[Y](s,t),
\end{equation}
where $Y(y_1,t)\ \triangleq\ \begin{bmatrix}
\dot e_g^\top & \dot{\tilde{x}}^\top & \tfrac{1}{2}\Theta^{-\frac{1}{2}}\dot\Theta & \tfrac{1}{2}\Omega^{-\frac{1}{2}}\dot\Omega
\end{bmatrix}^\top,
$ and \(K[\cdot]\) is set-valued map defined as $K[J](s, t) \triangeq \bigcap_{\delta > 0} \bigcap_{R_N = 0} \Bar{co} ~J(B(s, r) - N, t), $ where \(\Bar{co}\) is convex closure with \( B(s,r) \) is a ball of radius \( r \) centered at \( s \). 

Since \(V_{L_1}(y_1)\) is Lipschitz and regular, Filippov’s operator with using the generalized Lyapunov method for discontinuous systems \cite{guo2009generalized}, the right hand side of \eqref{aci.eq.proof1.4} can be written as
\begin{equation} \label{aci.eq.proof1.5}
    \big[{e}_g^T~ \gamma\dot{\tilde{x}}^T ~2 \Theta^{-\frac{1}{2}} ~ 2 \Omega^{-\frac{1}{2}}  \big]~  K ~\big[\dot{e}_g^T~ \dot{\tilde{x}}^T ~\frac{1}{2} \Theta^{-\frac{1}{2}} \dot{\Theta} ~ \frac{1}{2} \Omega^{-\frac{1}{2}} \dot{\Omega} \big]^T,
\end{equation}
 which we define the \eqref{aci.eq.proof1.5} as $V_R$, so we can consider \(\dot V_{L_1}(y_1)\in V_{R}\). Applying \eqref{aci.eq.proof02}, \eqref{aci.eq.proof1.2}, into \eqref{aci.eq.proof1.5}, we can rewrite $V_R$ as
 \begin{equation} \label{aci.eq.proof1.6}
 \begin{aligned}
       V_R=&e_g^T(T_1+T_2+T_3-p_1e_g-\gamma \tilde{x}-p_2 K[\text{sgn} (\tilde{x})]) 
        \\&+\gamma \tilde{x}(e_g-\alpha\tilde{x}) - e_g^T \left( T_2 - p_2\, K[\text{sgn}(\tilde{x})] \right)\\& 
    - \dot{\tilde{x}}^T T_5 
    + p_3 \rho_2(\|\theta\|) \|\theta\| \|\tilde{x}\|\\&
   - \frac{1}{2} \alpha 
    \left[ 
        \text{tr}\left( \tilde{w}_g^T \Upsilon_{w}^{-1} \dot{\hat{w}}_g \right) 
        + \text{tr}\left( \tilde{v}_g^T \Upsilon_{v}^{-1} \dot{\hat{{v}}}_g \right)
    \right],
     \end{aligned}
\end{equation}
where  \(K[\operatorname{sgn}(\tilde{x})]=\{1\}\) for \(\tilde{x}>0\), \([-1,1]\) for \(\tilde{x}=0\), and \(\{-1\}\) for \(\tilde{x}<0\). Based on this, \eqref{aci.eq.proof1.6} will be reduced
to scalar, also by using \eqref{identifier5}, \eqref{aci.eq.proof03}, and with considering Remark \ref{aci.remark04}, we can rewrite and simplify it to 
\begin{equation}\label{aci.eq.proof1.7}
    T_1 e_g-p_1 e^T_g  e_g-\alpha\gamma\tilde{x}^T\tilde{x}-T_4 \dot{\tilde{x}}^T+p_3 \rho_2(\|\theta\|) \|\theta\| \|\tilde{x}\|.
\end{equation}
Since the expression in \eqref{aci.eq.proof1.7} is continuous for \(\tilde{x}(t)\neq 0\) and any discontinuities occur on a set of times of Lebesgue measure zero, by Remarks~\ref{aci.remark01} and~\ref{aci.remark03} we can write 
\begin{equation}\label{aci.eq.proof1.8_1}
\begin{aligned}
V_{R} \leq& -\big(\frac{p_1}{2}-c_3\big) ||e_g||^2 -\big(\frac{\alpha\gamma}{2}-c_2\big)||\tilde{x}||^2\\& +
 \big(\frac{1}{2p_1}\rho_1(||\theta||)^2\big)+\big(\frac{1}{2\alpha\gamma}p_2^2 [\rho_2(\|\theta\|)]^2 \big)||\theta||^2,
 \end{aligned}
\end{equation}
which can be written as
\begin{equation}\label{aci.eq.proof1.9}
\begin{aligned}
V_R \leq& -l_1 ||\theta||^2+l_2((||\theta||)^2||\theta||^2,
\end{aligned}
\end{equation}
where $l_1\triangeq min\{\frac{p_1}{2}-c_3, \frac{\alpha\gamma}{2}-c_2\}$ and $l_2(\theta) \triangeq max \{ (\frac{1}{2\alpha\gamma} p_2^2 [\rho_2(\|\theta\|)]^2 ,\frac{1}{2p_1}[\rho_1(||\theta||)]^2  \}$. Then the right hand side of \eqref{aci.eq.proof1.8_1} can be bounded as

\begin{equation}\label{aci.eq.proof1.9}
\begin{aligned}
V_R \leq& -l_1 ||\theta||^2+l_2((||\theta||)^2||\theta||^2,
\end{aligned}
\end{equation}
therefore, there exist \(p_1\) and \(\gamma\) such that $ V_R\leq -l_3||\theta||^2$
where $l_3$ is a positive constant. This will leads to
\begin{equation}\begin{aligned}\label{aci.Stability_9}
\dot{{V}}_{L_1}(y_1)\leq-\Gamma_1(y_1),~~~~~~~\forall \dot{V}_{L_1}(y_1) \mathrel{\overset{\text{a.e.}}{\in}} \dot{R}_{L_1}(y_1),
\end{aligned}
\end{equation}
where $\Gamma_1(y_1)\triangeq l_3||\theta||^2$ is a positive, continuous, semi-definite function. applying bonds in \eqref{aci.eq.proof1.3}, we can write
\begin{equation}\begin{aligned}\label{aci.Stability_11}
\dot{{V}}_{L_1}(y_1)\leq-\frac{\Gamma_1(y_1)}{ b_{12}(y_1)} V_{L_1}.
\end{aligned}
\end{equation}
Therefore, with choosing positive gains \( \gamma > \tfrac{c_2}{\alpha} \) and \( p_1 > c_3 \) ensures, via \eqref{aci.eq.proof1.1}, asymptotic tracking such that
\[
\lim_{t\to\infty}\|\tilde{x}(t)\|=0,\qquad
\lim_{t\to\infty}\|\dot{\tilde{x}}(t)\|=0.
\]
\end{proof}
\begin{theorem} \label{aci.th.2}
Holding conditions of Theorem \ref{aci.th.1} with the identifier \eqref{identifier3} and update laws \eqref{identifier5}, $x(t)$, $\tilde{w}_a$, and $\tilde{w}_c$ are uniformly ultimately bounded (UUB) if and only if
\begin{equation} \label{aci.eq.proof2.0}
     \begin{cases}
         k_{a2} > \frac{k_{a1}}{4}c_8c_9+\frac{k_{a2}\epsilon_1}{2}+\frac{k_{c2}}{2\epsilon_2},
         \\
       k_{c2}+\frac{k_{c2}\epsilon_2}{2}+\frac{k_{a2}}{2\epsilon_1} >    
       \frac{\epsilon_3}{2}+k_{a1}c_7c_8+k_{c1}\frac{\Pc_1}{\kappa\Pc_0}.
       
     \end{cases}
 \end{equation}
\end{theorem}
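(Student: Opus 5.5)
We will prove Theorem~\ref{aci.th.2} by a composite Lyapunov argument built on top of Theorem~\ref{aci.th.1}. Take the candidate
\begin{equation*}
V_{L_2}=V_{L_1}+J^*(x)+\tfrac12\tilde w_c^{T}\Pc^{-1}\tilde w_c+\tfrac12\tilde w_a^{T}\tilde w_a ,
\end{equation*}
where $J^*$ is the optimal value function, positive definite and radially unbounded with $J^*(0)=0$. The covariance resetting gives $\Pc_0 I\le\Pc\le\Pc_1 I$, so $\Pc^{-1}$ is uniformly bounded above and below. Hence $V_{L_2}$ admits class-$\mathcal K$ bounds in $z\triangleq[y_1^T\;x^T\;\tilde w_c^T\;\tilde w_a^T]^T$. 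The $V_{L_1}$ part is handled exactly as in Theorem~\ref{aci.th.1}, contributing $-l_3\|\theta\|^2$. Because $\tilde x,\dot{\tilde x}\to0$, the identification error $\tilde F(x,\hat u)$ is treated as a vanishing (hence bounded) perturbation in the remaining terms.

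\textbf{Critic term.} Using \eqref{gamma_update}, $\tfrac{d}{dt}\Pc^{-1}=k_{c1}\varphi\varphi^T/(1+\kappa\varphi^T\Pc\varphi)$, and substituting \eqref{critic_update} and \eqref{aci.eq.27} (written as $\delta_{\mathrm{HJB}}=-\tilde w_c^T\varphi+\tfrac14\tilde w_a^T\phi'H\phi'^T\tilde w_a+\Delta$, where $\Delta$ collects the bounded $\xi'_v$ and $\tilde F$ terms), we obtain a negative term $-k_{c1}(\tilde w_c^T\varphi)^2/(1+\kappa\varphi^T\Pc\varphi)$, half cancelled by the $\Pc^{-1}$ derivative. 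The remaining pieces are a cross term of order $k_{c1}\tfrac{\Pc_1}{\kappa\Pc_0}\|\tilde w_c\|(\|\tilde w_a\|^2+\bar\Delta)$, using $\|\Pc\varphi\|/(1+\kappa\varphi^T\Pc\varphi)\le \Pc_1/(\kappa \Pc_0)$ after normalization, together with the coupling $k_{c2}\tilde w_c^T(\tilde w_c-\tilde w_a)$ coming from $\hat w_a-\hat w_c=\tilde w_c-\tilde w_a$.

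\textbf{Actor term.} Substitute \eqref{actor_update}. The projection only removes a component pointing outward, so it can be dropped as an upper bound. The $k_{a2}$ part gives $-k_{a2}\|\tilde w_a\|^2+k_{a2}\tilde w_a^T\tilde w_c$. The $k_{a1}$ part is bounded using Assumptions~\ref{aci.ass0}--\ref{aci.ass2} and $1/\sqrt{1+\varphi^T\varphi}\le1$. Denoting by $c_7,c_8,c_9$ the resulting bounds on $\phi'H\phi'^T$, $\partial\hat u/\partial\hat w_a$ and $\hat w_c$, this yields terms $\tfrac{k_{a1}}4c_8c_9\|\tilde w_a\|^2$ and $k_{a1}c_7c_8\|\tilde w_c\|^2$ plus constants.

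\textbf{Plant term.} $\dot J^*=J^*_xF(x,\hat u)=-\ell(x,u^*)+J^*_xh\tilde u$ by \eqref{aci.eq.7}. This gives $-C(x)$ plus a cross term in $\tilde w_a$, $\xi'_v$, which Young's inequality with $\epsilon_3$ bounds as $\tfrac{\epsilon_3}2\|\tilde w_c\|^2$ plus constants. Applying Young's inequality with $\epsilon_1,\epsilon_2$ to the two coupling products $k_{a2}\tilde w_a^T\tilde w_c$ and $k_{c2}\tilde w_c^T\tilde w_a$, we arrive at
\begin{equation*}
\dot V_{L_2}\le -l_3\|\theta\|^2-C(x)-\mu_a\|\tilde w_a\|^2-\mu_c\|\tilde w_c\|^2+\iota,
\end{equation*}
where $\mu_a,\mu_c$ are exactly the differences of the two sides of \eqref{aci.eq.proof2.0} and $\iota$ is a constant built from the reconstruction error bounds and $\bar w$. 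Under \eqref{aci.eq.proof2.0} both $\mu_a,\mu_c>0$, so $\dot V_{L_2}<0$ outside a compact ball, and the standard UUB theorem gives the result. The ``only if'' direction cannot really be shown by this argument; I would state it as the sufficiency condition the analysis requires.

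The main obstacle is the quartic term $\tilde w_c^T\varphi\cdot\tilde w_a^T\phi'H\phi'^T\tilde w_a$ inside $\delta_{\mathrm{HJB}}$, together with the $k_{a1}$ actor terms, which are cubic or quartic in the weight errors. To handle them, I would first invoke projection and Assumption~\ref{aci.ass1} to replace one factor of $\tilde w_a$ by the constant $\bar w+\bar w_a$, reducing these to quadratic terms absorbed into $c_7,c_8,c_9$.
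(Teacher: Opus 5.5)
The decisive step, that $\mu_a,\mu_c$ are ``exactly the differences of the two sides'' of \eqref{aci.eq.proof2.0}, is asserted rather than derived. For $\mu_c$ it contradicts your own intermediate terms: nothing in your computation supplies uniform negativity in $\tilde w_c$.

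You correctly write the critic coupling as $k_{c2}\tilde w_c^{T}(\tilde w_c-\tilde w_a)$ (with $\Pc^{-1}$ in the middle under your weighting), using $\hat w_a-\hat w_c=\tilde w_c-\tilde w_a$. With the $k_{c2}$ term of \eqref{critic_update} as written, this puts $+k_{c2}\tilde w_c^{T}\Pc^{-1}\tilde w_c\ge 0$ into $\dot V_{L_2}$. That term is destabilizing and cannot produce the $+k_{c2}$ on the left of the second inequality. The paper's proof gets $-k_{c2}\|\tilde w_c\|^2$ only by substituting $\hat w_a-\hat w_c=\tilde w_a-\tilde w_c$ in the critic term, while using the opposite, correct identity in the actor term.

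Young's inequality on $k_{a2}\tilde w_a^{T}\tilde w_c$ and $k_{c2}\tilde w_c^{T}\tilde w_a$ does give the penalties $\tfrac{k_{a2}\epsilon_1}{2}$ and $\tfrac{k_{c2}}{2\epsilon_2}$ in the first inequality. But it also adds $\bigl(\tfrac{k_{a2}}{2\epsilon_1}+\tfrac{k_{c2}\epsilon_2}{2}\bigr)\|\tilde w_c\|^2$ to the upper bound. These terms can never sit on the stabilizing side, as they do in the second inequality. As stated, with $\epsilon_1$ free, the second inequality can always be met by shrinking $\epsilon_1$.

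The only genuinely negative critic term you derive is $-\tfrac{k_{c1}}{2}(\tilde w_c^{T}\varphi)^2/(1+\kappa\varphi^{T}\Pc\varphi)$. It is rank one and vanishes for $\tilde w_c\perp\varphi$. It controls $\|\tilde w_c\|$ only under a persistence-of-excitation condition on $\varphi$, through an integral or converse-Lyapunov argument rather than pointwise, and no such hypothesis is available. To close the argument you need either a PE assumption strong enough to dominate the $+k_{c2}$ term, or a reversed sign on the $k_{c2}$ term in \eqref{critic_update}. Either way the Young's terms enter as penalties, and the gain conditions you obtain will differ from \eqref{aci.eq.proof2.0}.

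Several secondary steps are also fitted to the statement rather than derived:
\begin{itemize}
\item Your $\epsilon_3$ step bounds a cross term involving only $\tilde w_a$ and $\xi_v'$ by $\tfrac{\epsilon_3}{2}\|\tilde w_c\|^2$.
\item Under your $\Pc^{-1}$ weighting the $k_{c1}$ term yields no $\|\tilde w_c\|^2$ penalty, so the $k_{c1}\Pc_1/(\kappa\Pc_0)$ in the second inequality has no counterpart in your computation.
\item Your $c_9$ bounds $\hat w_c$, which the unprojected critic law does not provide. If you instead read Assumption~\ref{aci.ass1} as bounding $\hat w_c$, boundedness of $\tilde w_c$ is immediate and the Lyapunov argument is moot.
\end{itemize}

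Concluding $\dot V_{L_2}<0$ outside a compact set also requires $C(x)$ to be positive definite, and your claim that $J^*$ is positive definite and radially unbounded rests on the same property. But $C(x)=q_1x_1^2+q_2\max(x_2,0)^2$ vanishes on $\{x_1=0,\ x_2\le 0\}$. The paper simply asserts positive definiteness, and you inherit that gap.

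Structurally, your composite $V_{L_1}+J^*+\tfrac12\tilde w_c^{T}\Pc^{-1}\tilde w_c+\tfrac12\tilde w_a^{T}\tilde w_a$ is closer to the standard ACI construction than the paper's $J^*+\tfrac12\|\tilde w_c\|^2+\tfrac12\|\tilde w_a\|^2$. The paper uses Theorem~\ref{aci.th.1} only to bound the $\tilde F$ terms by constants. Your composite does not by itself supply the missing damping, though.

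On the ``only if'' direction you are right: a Lyapunov argument yields only sufficiency, and the paper's proof likewise establishes only the ``if'' direction.
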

\begin{proof}
 With the Lyapunov candidate function
\begin{equation} \label{aci.eq.proof2.1}
   V_{L_2}(x, \tilde{w}_c, \tilde{w}_a) =J^*(x)+\frac{1}{2}\tilde{w}_c^T \tilde{w}_c+\frac{1}{2}\tilde{w}_a^T \tilde{w}_a,
\end{equation}
and by introducing 
\begin{equation*}
\begin{aligned}
        b_{21}(||y_2||) &\triangeq \rho_{J1}+\tfrac{1}{2}||\tilde{w}_c||^2 +\tfrac{1}{2}||\tilde{w}_a||^2, \\
    b_{22}(||y_2||) &\triangeq \rho_{J2}++\tfrac{1}{2}||\tilde{w}_c||^2 +\tfrac{1}{2}||\tilde{w}_a||^2.
\end{aligned}
\end{equation*} where $y_2 \triangeq [x^T, \tilde{w}_c^T, \tilde{w}_a^T]^T$, and we can show there exist \(\rho_{J1}(\|x\|)\) and \(\rho_{J2}(\|x\|): [0, +\infty) \to [0, +\infty)\), as the lower and upper bounds of $J^*$. resulting in  \(b_{21}(\|y_2\|)\) and \(b_{22}(\|y_2\|)\) to serve as upper and lower bonds of $V_{L_2}$.

With considering the control input $\hat{u}$, Using \eqref{aci.eq.1}, \eqref{aci.eq.2} and  taking time derivative of $J^*$, with respect to \eqref{aci.eq.12}
\begin{equation} \label{aci.eq.proof2.2_2}
\begin{aligned}
\dot{J}^*=&-\pdv{J^*}{x}gu^*-C(x)-u^{*T} \beta u^*+\pdv{J^*}{x}g\hat{u}.
    \end{aligned}
\end{equation}
Therefore, the time derivative of the Lyapunov candidate function with respect to \eqref{critic_update},and \eqref{actor_update} would become

\begin{equation} \label{aci.eq.proof2.2}
\begin{aligned}
\dot{V}_{L_2}=&-\pdv{J^*}{x}gu^*-\Omega(x)-u^{*T} \beta u^*-2u^{*T}\beta
        \\&
   +\tilde{w}_c^T[k_{c1} \Pc \frac{\varphi}{1+\kappa\varphi^T\Pc\varphi}\delta_{HJB}+k_{c2}(\hat{w}_a-\hat{w}_c)]  
   \\&
   +\tilde{w}_a^T [\frac{k_{a1}}{\sqrt{1+\varphi^T\varphi}}\phi^{\prime}H\phi^{\prime T} (\hat{w}_a-\hat{w}_c)\delta_{HJB}
   \\&
    +k_{a2}(\hat{w}_a-\hat{w}_c)].
    \end{aligned}
\end{equation}

\vspace{1pt}Using \eqref{aci.eq.2}, \eqref{aci.eq.6} and, as $t\to\infty$, $\hat w_a-\hat w_c=\tilde w_a-\tilde w_c$, and by defining $\Psi_1 \triangleq \varphi/\sqrt{1+\kappa\,\varphi^{\top}\Pc\varphi}$ to rewrite \eqref{aci.eq.proof2.2}; then, invoking \eqref{aci.eq.27} and introducing $\Psi_2 \triangleq \frac{k_{a1}}{\sqrt{1+\varphi^{\top}\varphi}}\;\tilde w_a^{\top}\phi' H \phi'^{\top}$, the $\dot{V}_{L_2}$ upper bounded as
\begin{equation} \label{aci.eq.proof2.5} 
\begin{aligned}
&\dot{V}_{L_2}\leq-C(x)
   +\tfrac{1}{2}w^T \phi ^\prime H \xi^{\prime T}_v+\tfrac{1}{2}\xi^{\prime}_v H \xi^{\prime T}_v+\tfrac{1}{2}w^T \phi^{\prime} H \phi^{\prime T} \tilde{w}_a
 \\&
   +\frac{1}{2}\xi^{\prime}_v H \phi^{\prime T}_v \tilde{w}_a-\Psi_2 \tilde{w}_a\tilde{w}_c^T \varphi
   -\Psi_2 \tilde{w}_a w^T \phi ^\prime \tilde{F}(x,\hat{u})
   \\&
   +\Psi_2 \tilde{w}_c \tilde{w}^T_c \varphi
      +\tfrac{1}{4}\Psi_2 \tilde{w}_a \tilde{w}^T_a  \phi^{\prime} H \phi^{\prime T}   
      -\tfrac{1}{4} \Psi_2 \tilde{w}_a \xi^{\prime}_v H \xi^{\prime T}_v 
      \\&
+ \Psi_2 \tilde{w}_c w^T \phi ^\prime \tilde{F}(x,\hat{u})
    - \Psi_2 \tilde{w}^T_a\phi^{\prime} H \phi^{\prime T} \tilde{w}_a \xi^\prime_v F(x,u^*)
            \\&
      -\tfrac{1}{4}\Psi_2 \tilde{w}_c \tilde{w}^T_a \phi^\prime  H \phi^{\prime T}
      +\tfrac{1}{4}\Psi_2  \tilde{w}_c \xi^{\prime}_v H \xi^{\prime T}
      \\&
        +\Psi_2 \tilde{w}_c  \xi^{\prime}_v H \xi^{\prime T}F(x,u^*)
    +k_{c1} \tilde{w}_c ^T \Pc \tfrac{\Psi_1 \Psi_1^T}{\varphi} \xi^\prime_v H \xi^{\prime T}_v 
\\&
+k_{c1} \tilde{w}_c ^T \Pc \tfrac{\Psi_1 \Psi_1^T}{4\varphi}w_a^T\phi^\prime H \phi ^{\prime T}
-k_{c1} \tilde{w}_c ^T \Pc \tfrac{\Psi_1 \Psi_1^T}{4\varphi}\xi_v ^\prime H \xi_v ^{\prime T}
\\&
-k_{c1} \tilde{w}_c ^T \Pc \tfrac{\Psi_1 \Psi_1^T}{\varphi}w^T\phi^\prime \tilde{F}(x,\hat{u})
-k_{c1} \tilde{w}_c ^T \Pc \tfrac{\Psi_1 \Psi_1^T}{\varphi}\xi_v ^\prime F(x,u^*)
\\&
+k_{a2} \tilde{w}_a ^T \tilde{w}_c
-k_{a2} \tilde{w}_a ^T \tilde{w}_a
+k_{c2} \tilde{w}_c ^T \tilde{w}_a
-k_{c2} \tilde{w}_c ^T \tilde{w}_c.
    \end{aligned}
\end{equation}

\remark \label{aci.rem.3} Under Assumptions \ref{aci.ass0}-\ref{aci.ass2_ac}, there exists $c_7,~c_8,$$~c_{10},~c_9\in\mathbb{R}^+$ such that
\begin{equation*}
    \begin{aligned}
&||\tilde{w}_a|| \leq c_7,~~~~~~\|\phi' H \phi'^{\top}\|\le c_8,~~~~~~\|\xi_v' H \xi_v'^{\top}\|\le c_9,
\\&
\Big\|\tfrac{1}{2}w^{\top}\phi' H \xi_v'^{\top}
   + \tfrac{1}{2}\xi_v' H \xi_v'^{\top}
   + \tfrac{1}{2}w^{\top}\phi' H \phi'^{\top}\tilde w_a\\&~~~~~~~~~~~~~~~~~~~~~~~~~~~~~~~~~~~~~~
   + \tfrac{1}{2}\xi_v' H \phi'^{\top}\tilde w_a\Big\|\le c_{10}.
\end{aligned}
\end{equation*}
  \remark \label{aci.rem.c5c6} Holding Theorem \ref{aci.th.1} there exist $c_{11}, c_{12} \in \mathbb{R}^+$ such that $||w^T \phi^\prime \tilde{F}(x,\hat{u})||\leq c_{11},~~ ||\xi^\prime_v F(x,u^*)||\leq c_{12}$

Using \eqref{aci.eq.proof2.5} together with Remarks \ref{aci.rem.3}, \ref{aci.rem.c5c6}, the bound $|\Psi_1\Psi_1^T|\le 1/(\kappa\Pc_0)$, and Young’s inequality with $\epsilon_1, \text{and} ~\epsilon_2 \in \mathbb{R}$ as positive constants, we obtain

\begin{equation} \label{aci.eq.proof2.9}
\begin{aligned}
&\dot{V}_{L_2}\leq-C(x)+c_{10}
-(k_{a2}-\frac{k_{a1}}{4}c_8c_9+\frac{k_{a2}\epsilon_1}{2}+\frac{k_{c2}}{2\epsilon_2})||\tilde{w}_a||^2
\\&
-(k_{c2}-k_{a1}c_7c_8-k_{c1}\frac{\Pc_1}{\kappa\Pc_0}+\frac{k_{a2}}{2\epsilon_1}+\frac{k_{c2}\epsilon_2}{2})||\tilde{w}_c||^2
\\&+\big[k_{a1}c_7c_8(c_7+c_{11}+c_7c_8+c_9+c_{12})+\\&~~~~~~~~~~~~~~k_{c1}\frac{\Pc_1}{2\sqrt{\kappa\Pc_0}}(c_{11}+c_7c_8+\frac{c_9}{4}+c_{12})\big]||\tilde{w}_c||\\&
+k_{a1}c^2_7c_8c_{11}+c_7^2c_8+c_7c_{12}.
    \end{aligned}
\end{equation}
With more simplification, and applying Young’s inequality Applying Young’s inequality along with bonds in\eqref{aci.eq.proof2.0} we can write
\begin{equation} \label{aci.eq.proof2.11}
\begin{aligned}
&\dot{V}_{L_2}\leq-C(x)
-\big[(k_{a2}-\frac{k_{a1}}{4}c_8c_9+\frac{k_{a2}\epsilon_1}{2}+\frac{k_{c2}}{2\epsilon_2})\big]||\tilde{w}_a||^2
\\&
-\Big[(k_{c2}-k_{a1}c_7c_8-k_{c1}\frac{\Pc_1}{\kappa\Pc_0}+\frac{k_{a2}}{2\epsilon_1}+\frac{k_{c2}\epsilon_2}{2})-\frac{\epsilon_3}{2}\Big]||\tilde{w}_c||^2
\\&+\frac{1}{2\epsilon_3} \Big[k_{a1}c_7c_8(c_7+c_{11}+c_7c_8+c_9+c_{12})+\\&~~~~~~~~~~~~~~~~~~~~k_{c1}\frac{\Pc_1}{2\sqrt{\kappa\Pc_0}}(c_{11}+c_7c_8+\frac{c_9}{4}+c_{12})\Big]^2
\\&
+(k_{a1}c_7c_8)c_7c_{11}+c_7^2c_8+c_7c_{12}+c_{10}.
    \end{aligned}
\end{equation}
where $\epsilon_3 \in \mathbb{R}$ is a positive constant. 
Since \(C(x)\) is positive definite, there exists a strictly increasing function \(\Gamma_2:[0,\infty)\!\to\![0,\infty)\) such that
\begin{equation*}
    \begin{aligned}
        &\Gamma_2(\|y_2\|)\;\le\; C(x)
+\Big(k_{a2}-\tfrac{k_{a1}}{4}c_8c_9+\tfrac{k_{a2}\epsilon_1}{2}+\tfrac{k_{c2}}{2\epsilon_2}\Big)\|\tilde w_a\|^2\\&
~~~~~+\Big(k_{c2}-k_{a1}c_7c_8-k_{c1}\tfrac{\Pc_1}{\kappa\Pc_0}+\tfrac{k_{a2}}{2\epsilon_1}+\tfrac{k_{c2}\epsilon_2}{2}-\tfrac{\epsilon_3}{2}\Big)\|\tilde w_c\|^2.
    \end{aligned}
\end{equation*}
Therefore, from \eqref{aci.eq.proof2.11}, we can write,
\begin{equation} \label{aci.eq.proof2.12}
\begin{aligned}
&\dot{V}_{L_2}\leq-\Gamma_2(||y_2||)+k_{a1}c^2_7c_8c_{11}+c_7^2c_8+c_7c_{12}+c_{10}
\\&+\frac{1}{2\epsilon_3} \Big[k_{a1}c_7c_8(c_7+c_{11}+c_7c_8+c_9+c_{12})+\\&~~~~~~~~~~~~~~~~~~~k_{c1}\frac{\Pc_1}{2\sqrt{\kappa\Pc_0}}(c_{11}+c_7c_8+\frac{c_9}{4}+c_{12})\Big]^2.
    \end{aligned}
\end{equation}
With considering bonds on the Lyapunov candidate function, we can write
\begin{equation} \label{aci.eq.proof2.13}
\begin{aligned}
&\dot{V}_{L_2}\leq-\frac{\Gamma_2(||y_2||)}{b_{22}(||y_2||)}V_{L_2}+k_{a1}c^2_7c_8c_{11}+c_7^2c_8+c_7c_{12}+c_{10}
\\&+\frac{1}{2\epsilon_3} \big[k_{a1}c_7c_8(c_7+c_{11}+c_7c_8+c_9+c_{12})+\\&~~~~~~~~~~~~~~~~k_{c1}\frac{\Pc_1}{2\sqrt{\kappa\Pc_0}}(c_{11}+c_7c_8+\frac{c_9}{4}+c_{12})\big]^2.
    \end{aligned}
\end{equation}
The inequality in \eqref{aci.eq.proof2.13} ensure semi-global UUB of \(y_2\). Moreover, by Assumption~\ref{aci.ass3_ac}, increasing the NN size will yields \(\xi_v' \to 0\); therefore, as \(t\!\to\!\infty\), the actor-critic weight errors satisfy \(\tilde w_a \to 0\) and \(\tilde w_c \to 0\).
\end{proof}

\section{Results} \label{aci.sec.6}
To evaluate the ACI controller on an EV with unknown dynamics, we embed a realistic model of powertrain–dynamics plant from \cite{faghihian2025novel} into the loop as described in Figure.~\ref{aci.controller_scheme} and the plant is treated as a black box. The controller identify the unknown system dynamics online, track the desired speed, and improve energy efficiency. We assume the function \(h(x)\) is known and constant, chosen as $h_1\big((v_v-v_d),P_{ntp}\big)=0,~h_2\big((v_v-v_d),P_{ntp}\big)=1$. The NN structures and parameter values are summarized in Table~\ref{aci.tab.paras}.

\begin{table}[] 
\centering
\caption{NN structures and controller gains}
\begin{tabular}{lll} 
\toprule
Block & Item & Definition / Value \\
\midrule
\multirow{6}{*}{Identifier}
&Activation,~$L_g$& $\displaystyle \sigma_g(z)=\frac{2}{1+\exp(-v_g^{T}\hat x)}-1\;,~5 $ \\
&$p_1,$~~$p_2$ & $80,~~0.2$ \\
& $\alpha,~~~\gamma$ & $300,~~5$ \\
& $\Upsilon_{w},\,\Upsilon_{v}$ & $0.1\,I_{6\times6},~\;0.1\,I_{2\times2}$ \\
\midrule
\multirow{5}{*}{Actor--Critic}
& $\phi(x)$,~~$N$ & $\displaystyle \phi(x)=\begin{bmatrix} x_1^2 & x_1x_2 & x_2^2 \end{bmatrix}^{T}$,~~3 \\
& $k_{a1},~~k_{a2}$ & $10,~~50$ \\
& $k_{c1},~~k_{c2}$ & $11,~~30$ \\
& $\kappa$ & $0.005$ \\
\bottomrule
\end{tabular}\label{aci.tab.paras}
\end{table}

To evaluate the proposed controller, we considered a representative drivecycle that can be generated by a motion planner system for AVs or induced by traffic. The EV supposed to track the drivecycle profile. As a baseline, we implemented and tuned a PID controller for the same objective, and we compared performance of our proposed controller against the PID controller performance. This comparison shown in Figure \ref{aci.fig.drivCycleComparison}.

\begin{figure} [] 
    \centering
    \includegraphics[width=0.85\linewidth]{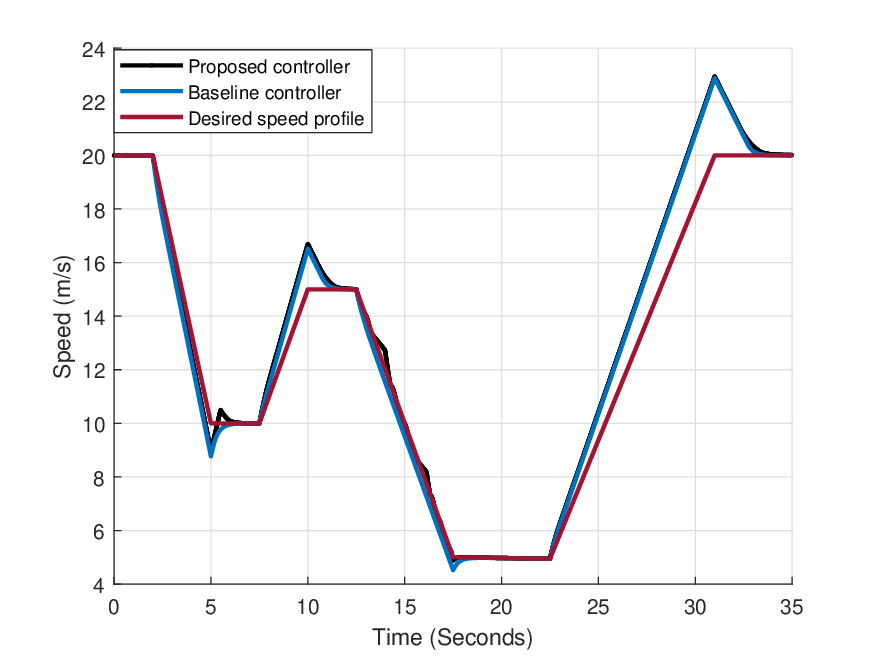}
    \caption{Desired profile versus trajectories under the proposed ACI controller and the baseline controller.}
    \label{aci.fig.drivCycleComparison}
\end{figure}

  \begin{figure}[h!] 
  \centering
    \includegraphics[width=0.85\linewidth]{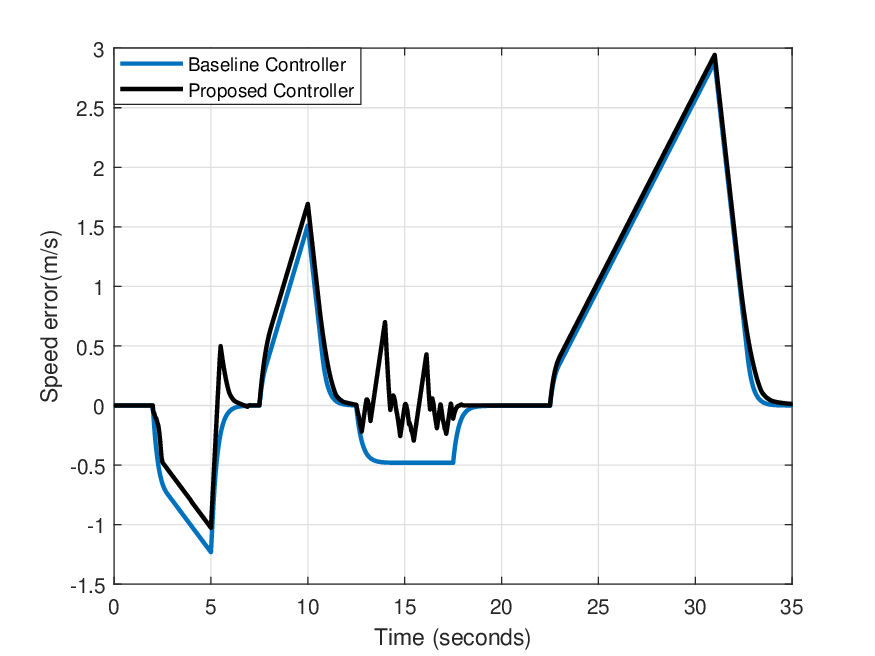}
    \caption{Speed-tracking error over the drivecycle for the proposed ACI controller and the baseline controller.}
        \label{aci.fig.statesComp}
\end{figure}

\begin{figure}[h]
    \centering
    \includegraphics[width=0.85\linewidth]{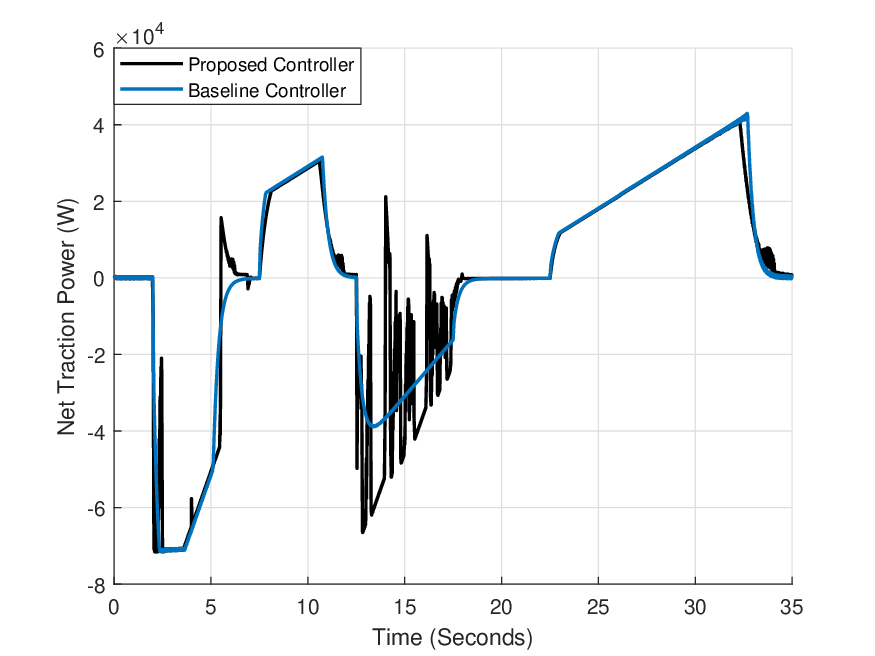}
\caption{Net traction power over the drive cycle for the proposed ACI controller and the baseline controller.}    
    \label{aci.fig.powerComp}
\end{figure}

Since the controller is designed to drive the system states to zero, we showed the controller states trajectories in Figure \ref{aci.fig.statesComp} and \ref{aci.fig.powerComp} respectively.
Also in order o be able to measure how much the total net traction power\footnote{The net traction power defined as the power used for longitudinal acceleration/deceleration; aerodynamic, rolling-resistance, and friction losses are excluded.} The Figure \ref{aci.fig.powerComp} also shows each controller power took from the EV battery. Note that the positive values of power means that the power used from battery and the negative power showing the power recharging back to the EV battery due to the RBS.

The net traction energy over the drive cycle is obtained by integrating the net traction power signal. We can perform this measurement to indicate how much energy consumed or recaptured under same drivecycle with different controllers. The baseline PID consumes almost \(13.835\ \mathrm{kJ}\), whereas the proposed ACI controller consumes almost \(7.99\ \mathrm{kJ}\), which is a \(42\%\) reduction. Because the initial and final speeds are equal, this gap primarily reflects drivetrain conversion losses, which the proposed ACI controller reduces those losses. The comparison of this effect shown in Table \ref{aci.tab.2}. 

\begin{table}[h]\centering
\caption{Comparison of  total energy consumption between the proposed and baseline controllers}\label{aci.tab.2}
\begin{tabular}{|lc|}
\hline
\textbf{Controller}               &  \textbf{Energy  consumed}  \\ \hline 
Proposed ACI controller        & 7.99 KJ                          \\
Baseline PID controller              & 13.835 KJ
\\
\hline
\end{tabular}
\end{table}

\section{Conclusion} \label{aci.sec.7}
This study showed a design procedure for an ACI controller aiming at maximizing energy efficiency while maintaining accurate drive-cycle tracking for an EV with unknown dynamics. For the design purpose a Lyapunov/Filippov analysis utilized.

Simulations on a realistic integrated EV powertrain–dynamics model show that the proposed controller achieves lower tracking error than a tuned PID baseline and reduces net traction energy over the drivecycle by a \(42\%\). These results indicate that energy-aware optimal control can be realized without an explicit power-consumption model, relying instead on online identification and policy learning.

Future works will focus on hardware-in-the-loop and on-road experiments, and integration with connected-vehicle information. Another possible future work would be developing a two-mode controller design that treats coasting/regenerative braking separately from traction/acceleration, we expect this separation would increase energy recovery and overall efficiency.

\bibliography{Ref.bib}

@book{kamalapurkar2018reinforcement,
  title={Reinforcement learning for optimal feedback control},
  author={Kamalapurkar, Rushikesh and Walters, Patrick and Rosenfeld, Joel and Dixon, Warren},
  year={2018},
  publisher={Springer}
}

@book{filippov2013differential,
  title={Differential equations with discontinuous righthand sides: control systems},
  author={Filippov, Aleksei Fedorovich},
  volume={18},
  year={2013},
  publisher={Springer Science \& Business Media}
}

@article{faghihian2025novel,
  title={A Novel Energy-Efficient Automated Regenerative Braking System},
journal={Applied Energy},
year={2025},
  author={Faghihian, Hamed and Sargolzei, Arman}
}

@article{faghihian2023energy,
  title={Energy Efficiency of Connected Autonomous Vehicles: A Review},
  author={Faghihian, Hamed and Sargolzaei, Arman},
  journal={Electronics},
  volume={12},
  number={19},
  pages={4086},
  year={2023},
  publisher={MDPI}
}

@incollection{FAGHIHIAN20241,
title = {Chapter 1 - Introduction to autonomous vehicles},
editor = {Muhammad H. Rashid},
booktitle = {Handbook of Power Electronics in Autonomous and Electric Vehicles},
publisher = {Academic Press},
pages = {1-16},
year = {2024},
isbn = {978-0-323-99545-0},
doi = {https://doi.org/10.1016/B978-0-323-99545-0.00018-X},
url = {https://www.sciencedirect.com/science/article/pii/B978032399545000018X},
author = {Hamed Faghihian and James Holland and Arman Sargolzaei},
}

@article{hung2021regionalized,
  title={Regionalized climate footprints of battery electric vehicles in Europe},
  author={Hung, Christine Roxanne and V{\"o}ller, Steve and Agez, Maxime and Majeau-Bettez, Guillaume and Str{\o}mman, Anders Hammer},
  journal={Journal of Cleaner Production},
  volume={322},
  pages={129052},
  year={2021},
  publisher={Elsevier}
}

@inproceedings{al2018minimizing,
  title={Minimizing energy consumption from connected signalized intersections by reinforcement learning},
  author={Al Islam, SMA Bin and Aziz, HM Abdul and Wang, Hong and Young, Stanley E},
  booktitle={2018 21st International Conference on Intelligent Transportation Systems (ITSC)},
  pages={1870--1875},
  year={2018},
  organization={IEEE}
}

@article{chen2024achieving,
  title={Achieving Energy-Efficient and Travel Time-Optimized Trajectory and Signal Control for CAEVs},
  author={Chen, Huiyu and Wu, Fan and Qiu, Tony Z},
  journal={IEEE Transactions on Intelligent Transportation Systems},
  year={2024},
  publisher={IEEE}
}

@article{mahmoud2021autonomous,
  title={Autonomous eco-driving with traffic light and lead vehicle constraints: An application of best constrained interpolation},
  author={Mahmoud, Yara Hazem and Brown, Nicholas E and Motallebiaraghi, Farhang and Koelling, Melinda and Meyer, Richard and Asher, Zachary D and Dontchev, Assen and Kolmanovsky, Ilya},
  journal={IFAC-PapersOnLine},
  volume={54},
  number={10},
  pages={45--50},
  year={2021},
  publisher={Elsevier}
}

@article{shang2022regenerative,
  title={Regenerative braking control strategy based on multi-source information fusion under environment perception},
  author={Shang, Yue and Ma, Chao and Yang, Kun and Tan, Di},
  journal={International Journal of Automotive Technology},
  volume={23},
  number={3},
  pages={805--815},
  year={2022},
  publisher={Springer}
}

@article{kim2021parameterized,
  title={Parameterized energy-optimal regenerative braking strategy for connected and autonomous electrified vehicles: A real-time dynamic programming approach},
  author={Kim, Dohee and Eo, Jeong Soo and Kim, Kwang-Ki K},
  journal={IEEE Access},
  volume={9},
  pages={103167--103183},
  year={2021},
  publisher={IEEE}
}

@incollection{faghihian2024introduction,
  title={Introduction to autonomous vehicles},
  author={Faghihian, Hamed and Holland, James and Sargolzaei, Arman},
  booktitle={Handbook of Power Electronics in Autonomous and Electric Vehicles},
  pages={1--16},
  year={2024},
  publisher={Elsevier}
}

@article{yao2021adaptive,
  title={Adaptive real-time optimal control for energy management strategy of extended range electric vehicle},
  author={Yao, Mingyao and Zhu, Bo and Zhang, Nong},
  journal={Energy Conversion and Management},
  volume={234},
  pages={113874},
  year={2021},
  publisher={Elsevier}
}

@article{naeem2024energy,
  title={Energy Efficient Solution for Connected Electric Vehicle and Battery Health Management Using Eco-driving Under Uncertain Environmental Conditions},
  author={Naeem, Hafiz Muhammad Yasir and Bhatti, Aamer Iqbal and Butt, Yasir Awais and Ahmed, Qadeer and Bai, Xiaoshan},
  journal={IEEE Transactions on Intelligent Vehicles},
  year={2024},
  publisher={IEEE}
}

@article{jeong2024adaptive,
  title={Adaptive robust electric vehicle routing under energy consumption uncertainty},
  author={Jeong, Jaehee and Ghaddar, Bissan and Zufferey, Nicolas and Nathwani, Jatin},
  journal={Transportation Research Part C: Emerging Technologies},
  volume={160},
  pages={104529},
  year={2024},
  publisher={Elsevier}
}

@article{lee2021energy,
  title={Energy management strategy of fuel cell electric vehicles using model-based reinforcement learning with data-driven model update},
  author={Lee, Heeyun and Cha, Suk Won},
  journal={IEEE Access},
  volume={9},
  pages={59244--59254},
  year={2021},
  publisher={IEEE}
}

@article{chen2020deep,
  title={Deep reinforcement learning based left-turn connected and automated vehicle control at signalized intersection in vehicle-to-infrastructure environment},
  author={Chen, Juan and Xue, Zhengxuan and Fan, Daiqian},
  journal={Information},
  volume={11},
  number={2},
  pages={77},
  year={2020},
  publisher={MDPI}
}

@inproceedings{isele2018navigating,
  title={Navigating occluded intersections with autonomous vehicles using deep reinforcement learning},
  author={Isele, David and Rahimi, Reza and Cosgun, Akansel and Subramanian, Kaushik and Fujimura, Kikuo},
  booktitle={2018 IEEE international conference on robotics and automation (ICRA)},
  pages={2034--2039},
  year={2018},
  organization={IEEE}
}

@article{guo2009generalized,
  title={Generalized Lyapunov method for discontinuous systems},
  author={Guo, Zhenyuan and Huang, Lihong},
  journal={Nonlinear Analysis: Theory, Methods \& Applications},
  volume={71},
  number={7-8},
  pages={3083--3092},
  year={2009},
  publisher={Elsevier}
}

@article{xiong2018battery,
  title={Battery and ultracapacitor in-the-loop approach to validate a real-time power management method for an all-climate electric vehicle},
  author={Xiong, Rui and Duan, Yanzhou and Cao, Jiayi and Yu, Quanqing},
  journal={Applied energy},
  volume={217},
  pages={153--165},
  year={2018},
  publisher={Elsevier}
}

@article{nyong2020reinforcement,
  title={Reinforcement learning based adaptive power pinch analysis for energy management of stand-alone hybrid energy storage systems considering uncertainty},
  author={Nyong-Bassey, Bassey Etim and Giaouris, Damian and Patsios, Charalampos and Papadopoulou, Simira and Papadopoulos, Athanasios I and Walker, Sara and Voutetakis, Spyros and Seferlis, Panos and Gadoue, Shady},
  journal={Energy},
  volume={193},
  pages={116622},
  year={2020},
  publisher={Elsevier}
}

@article{hu2019reinforcement,
  title={Reinforcement learning for hybrid and plug-in hybrid electric vehicle energy management: Recent advances and prospects},
  author={Hu, Xiaosong and Liu, Teng and Qi, Xuewei and Barth, Matthew},
  journal={IEEE Industrial Electronics Magazine},
  volume={13},
  number={3},
  pages={16--25},
  year={2019},
  publisher={IEEE}
}

@article{he2021improved,
  title={An improved energy management strategy for hybrid electric vehicles integrating multistates of vehicle-traffic information},
  author={He, Hongwen and Wang, Yunlong and Li, Jianwei and Dou, Jingwei and Lian, Renzong and Li, Yuecheng},
  journal={IEEE Transactions on Transportation Electrification},
  volume={7},
  number={3},
  pages={1161--1172},
  year={2021},
  publisher={IEEE}
}

@article{du2021heuristic,
  title={Heuristic energy management strategy of hybrid electric vehicle based on deep reinforcement learning with accelerated gradient optimization},
  author={Du, Guodong and Zou, Yuan and Zhang, Xudong and Guo, Lingxiong and Guo, Ningyuan},
  journal={IEEE Transactions on Transportation Electrification},
  volume={7},
  number={4},
  pages={2194--2208},
  year={2021},
  publisher={IEEE}
}

@inproceedings{haarnoja2018soft,
  title={Soft actor-critic: Off-policy maximum entropy deep reinforcement learning with a stochastic actor},
  author={Haarnoja, Tuomas and Zhou, Aurick and Abbeel, Pieter and Levine, Sergey},
  booktitle={International conference on machine learning},
  pages={1861--1870},
  year={2018},
  organization={PMLR}
}
\bibliographystyle{ieeetr}
\end{document}